\documentclass[journal]{IEEEtran}
\usepackage{cite}
\ifCLASSINFOpdf
\else
\fi
\usepackage{amsmath}
\usepackage{pdfpages}
\usepackage{amsthm}
\usepackage{amssymb}
\usepackage{array}
\usepackage{url}
\usepackage[ruled,vlined]{algorithm2e}
\usepackage{graphicx}
\usepackage{flushend}
\usepackage[caption=false,font=footnotesize]{subfig}
\usepackage{multirow}
\usepackage{mathrsfs}  
\usepackage{xcolor}
\hyphenation{op-tical net-works semi-conduc-tor}
\uchyph=0

\newtheorem{lemma}{Lemma}
\newtheorem{theorem}{Theorem}
\newtheorem{mydef}{Definition}
\newtheorem{remark}{Remark}

\newtheorem{example}{Example}
\usepackage{xfrac}
\usepackage{enumitem}

\begin{document}

\title{Scalable Secret Key Generation for \\ Wireless Sensor Networks}

%
%

\author{Ufuk~Altun, Semiha T. Basaran, Gunes Karabulut~Kurt,~\IEEEmembership{Senior~Member,~IEEE,} Enver~Ozdemir,~\IEEEmembership{Senior~Member,~IEEE} 
\thanks{Ufuk~Altun  is with the Department of Electrical and Electronics  Engineering, Koc University, Istanbul, Turkey, email: ualtun20@ku.edu.tr.  During this work, Ufuk~Altun and Gunes Karabulut Kurt were with the Department of Electronics and Communications Engineering, Istanbul Technical University, Istanbul, Turkey.} 
      
\thanks{Semiha T. Basaran is  with the Department of Electronics and Communications Engineering, Istanbul Technical University, Istanbul, Turkey, email: tedik@itu.edu.tr.} 
\thanks{G. Karabulut Kurt is with the {Poly-Grames Research Center, Department of Electrical Engineering,  Polytechnique Montr\'eal, Montr\'eal, Canada, e-mail: gunes.kurt@polymtl.ca}.}
\thanks{Enver Ozdemir is with Informatics Institute, Istanbul Technical University, Istanbul, Turkey e-mail:  ozdemiren@itu.edu.tr.}
\thanks{This work was supported in part by NSERC Discovery Grant.}
}

\IEEEoverridecommandlockouts 

\maketitle

	\begin{abstract}
	Future sensor networks require energy and bandwidth efficient designs to support the growing number of nodes. The security aspect is often neglected due to the extra computational burden imposed on the sensor nodes. In this paper, we propose a secret key generation method for wireless sensor networks by using the physical layer features. This key generation method is based  on the superposition property of wireless channels. The proposed method exploits the multiple access property of the wireless channel with simultaneous transmissions as in the \textit{analog function computation} technique to solve the latency and scarce bandwidth problems of highly populated dense networks. All nodes use the same time and frequency block to provide scalability that is \textit{linearly proportional} to the number of nodes. The proposed method also benefits from the network density to provide security against eavesdroppers that aim to sniff the secret key from the channel. The security of the proposed method against eavesdroppers is analytically studied. Moreover, their application in multiple layers is investigated. The presented results have shown that there is a trade-off between the total power consumption and total used bandwidth for secret key generation. Lastly, the error probability of the generated keys due to thermal noise and channel estimation error is investigated with computer simulations and compared with broadcasting-based benchmark model.
	\end{abstract}
	
	
	\begin{IEEEkeywords}
	 Wireless sensor networks, security, physical layer, key agreement, key generation, analog function computation, prime integers.
	\end{IEEEkeywords}

\IEEEpeerreviewmaketitle

	\section{Introduction}

\IEEEPARstart{T}{he} number of nodes in wireless sensor networks (WSNs) that is connected with the Internet of things (IoT) is expected to grow significantly in the future. For this reason, communication technologies that is able to handle the limited wireless resources more efficiently than the traditional methods are necessary for IoT. Providing security in these settings is a challenging task that costs network resources proportional to the network size with conventional security measures. A common method for hiding information from adversaries is the encryption and decryption of the messages at the upper layers. However, providing a secret key to nodes is a vulnerable process that can be intercepted by the eavesdropper and also requires the existence of a central node, which improves the network cost. 

As an alternative strategy, physical layer key generation (PLKG) has gained great attention in recent years. PLKG algorithms are mainly based on exploiting the randomness and uniqueness of the wireless channel to keep eavesdroppers in dark. Most of the PLKG algorithms are designed for two node networks as in \cite{Lu2021, Hu2021, Furqan2021, Zhang2021}, where only pairwise communication is possible. As a result, scalability and resource management problems are generally overlooked. Especially, implementing a pairwise algorithm consecutively in a large network is time, bandwidth and energy inefficient as well as highly vulnerable to eavesdroppers.

Several studies in the literature such as \cite{Xu2016a, Xu2020, jiao2020, Peng2020, Liu2014, Xiao2018, Thai2019, Guyue2019, Zhang2019,9697095 } considered this problem and focused on group networks. These studies can be classified into two categories. In the first category \cite{Xu2016a, Xu2020, jiao2020, Peng2020}, the algorithms include pairwise communications to generate a group secret key. These studies target relatively small networks and require impractical communication times in dense networks. In the second category \cite{Liu2014, Xiao2018, Thai2019, Guyue2019, Zhang2019 }, the methods are purely based on broadcasting and are suitable for dense networks. In \cite{Liu2014}, the authors consider a new metric where, instead of raw RSSI values, difference of two received signal strength indicator (RSSI) values from two different channels is used. Other than the secrecy improvement of the new metric, study also considers the scenario where all nodes are not in each other's communication range. In \cite{Xiao2018}, the network consists of a central node and a reference node to organize a scalable key distribution model. Each node uses the channel information between itself and a reference node for key reconciliation. 

The authors of \cite{Thai2019} consider the multiple antenna scenario where the method depends on multiple antenna transmission of nodes. The proposed algorithm solves the optimization problems of antenna selection and time scheduling with brute force search. A scalable lightweight algorithm is proposed in \cite{Guyue2019}. The method improves its scalability by dropping the reconciliation phase and relying the similarity between pairwise bit sequences. As main drawbacks, these algorithms either assume a noiseless channel or apply channel coding to provide a noiseless channel between the nodes for information exchange. In \cite{Zhang2019}, an orthogonal frequency-division multiple access (OFDMA)-based key generation model is proposed. The method aims to reduce the duration of channel estimation phase by exploiting OFDMA. Specifically, the method is able to complete the channel estimation process in a single time slot by assigning a unique frequency to each node. Since OFDMA networks already assign a specific frequency to each user, the proposed method does not require additional bandwidth. However, the idea is only limited to OFDMA systems. Recently, multiple antenna scenario is considered in \cite{9697095}. The study implements a version of index modulation where the common information is encoded to the index of the non-activated antenna. Also, the authors present a thorough theoretical investigation. 

We consider the key generation problem of multiple node networks, where the scalability and efficiency are essential requirements. The methods given in \cite{Xu2016a, Xu2020, jiao2020, Peng2020} fail to satisfy these requirements since they rely on pairwise communications and their time/bandwidth consumption increases exponentially with the network size. More related studies such as \cite{Liu2014, Xiao2018, Thai2019, Guyue2019, Zhang2019, 9697095} are able to satisfy these requirements since their communication model is based on broadcasting. However, broadcasting a key component is essentially vulnerable to eavesdroppers when additional countermeasures are not taken. Most of these studies require third parties, reference nodes (requires doubled channel estimation process) or perfect feedback channels (requires channel coding overhead). In Table \ref{tab:literature}, we present various group key generation methods with their pros and cons. Our model significantly differs from these methods on the communication model by exploiting simultaneous transmission instead of broadcasting.

\begin{table*}[t]
    \centering
    \caption{Related studies in the literature.}
    \begin{tabular}{|p{1.9cm}|p{7cm}|p{7.5cm}|} \hline The study & Advantages & Disadvantages \\ \hline
 Liu \textit{et al.} \cite{Liu2014} & Considers the scenario where all nodes are not in their communication range. & A central and a reference node is required which requires doubled channel estimation process. Requires secrecy amplification process to overcome the vulnerability of broadcasting information. \\ \hline

Xiao \textit{et al.} \cite{Xiao2018} & Comparative results with \cite{Liu2014} is presented. Attains higher \textit{key length rates} than \cite{Liu2014}. & A central and a reference node is required which doubles the channel estimation process. Requires a noiseless channel for key generation. \\ \hline

Thai \textit{et al.}    \cite{Thai2019} & Presents a successful algorithm for multiple antenna networks. Includes real testbed implementation. & Only applicable to networks where each node is equipped with multiple antennas. \\ \hline        

Li \textit{et al.}  \cite{Guyue2019} & Comparative results with \cite{Liu2014} is presented. Attains higher \textit{bit error rates} than \cite{Liu2014}. Lightweight since key reconciliation process is omitted. & Noiseless broadcast channels are required. \\ \hline

Zhang \textit{et al.} \cite{Zhang2019} & Reduces the load of channel estimation process for OFDMA networks.  & The contribution is only valid for OFDMA networks. \\ \hline
    \end{tabular}
    \label{tab:literature}
\end{table*}

Simultaneous transmission indicates transmission of multiple nodes at the same time and frequency interval. Although it is strikingly time and bandwidth efficient (since all nodes use the same time and frequency slots), its application is very limited in the literature. The reason simply comes from the fact that individual signals superimpose over the channel and can not be reconstructed at the receiver. However, various real world tasks are not interested in individual information and the superimposed signal is enough to work with. As an example, the task of reliably measuring room temperature with multiple sensor nodes can be given. Since a user is only interested in the average sensor measurements, the nodes can simultaneously transmit their readings to a center and the superimposed signal is enough to compute the average temperature. This example is the main target of analog function computation (AFC) studies, pioneered by Gastpar~\cite{Gastpar2003, Nazer2007a} and Sta\'nczak~\cite{Goldenbaum2009a, Goldenbaum2013, Goldenbaum2013b}. Essentially, AFC aims to combine communication and computation processes of a given task over the wireless multiple access channel (W-MAC). From a mathematical view, W-MAC naturally computes a summation operation over the simultaneously transmitted signals. AFC proposes that this paradigm can be extended to compute any function by using proposer \textit{pre} and \textit{post}-processing steps. An overview of AFC applications is given in \cite{altun2021magic}.

In~\cite{zhu2018mimo} and~\cite{chen2018over}, the idea behind the AFC is extended to MIMO and beamforming technologies. 
In~\cite{Basaran2020}, an energy efficient AFC scheme is proposed for densely deployed IoT networks. In \cite{Wang2021}, AFC is implemented with intelligent reflecting surfaces. In \cite{Liu2021}, AFC is exploited for the security of federated learning algorithms. In our recent study \cite{altun2020}, we have considered exploiting AFC to provide an authenticated data transmission model against active attacks. To the best of our knowledge, this is the first study that exploits AFC and simultaneous transmission to generate group secret key. 

In this paper, we focus on the key generation problem of WSNs which presents two major design challenges: scalability for dense networks and limited communication resources (e.g., time, bandwidth). Considering these challenges, our motivation is to provide a flexible and scalable group key generation method which can be applied to dense and resource limited WSNs. The study differs from existing literature by its communication model which is based on simultaneous transmission of all nodes at the same time and frequency (contrary to existing studies which all nodes sequentially broadcasts their key components). Our contributions are listed as follows.

    \begin{itemize}
       \item The key agreement time of the proposed model is linearly proportional to the network size as a result of the unique communication model, i.e., $N$ nodes can agree on a secret key in $N$ time slots.
       \item The method is secure against eavesdroppers. The security aspect increases with the network size, since eavesdropped signals are the superposition of multiple signals coming from different paths. Each unique path eventually contributes to the distortion of Eve's received signal. 
       \item We extend our algorithm to be used in multiple layers by decomposing the WSN into multiple subnetworks. The decomposition provides a trade-off between energy, time and bandwidth consumption. Hence, the method is flexible to the needs of WSN.
       \item The proposed method does not require a trusted third party, a reference node or a central node to distribute information to other nodes. As a result, it is energy and complexity efficient on channel estimation or key agreement processes.

    \end{itemize}

    The rest of this paper is organized as follows. In the next section, background knowledge needed to present our method is given with definitions. In Section III, we present our key generation approach. The security aspect of the proposed approach is investigated analytically and supported with simulations in Section IV. Lastly, the paper is concluded in Section V.

	\section{Preliminaries}

	\begin{figure}[t]
		\includegraphics[width=\linewidth]{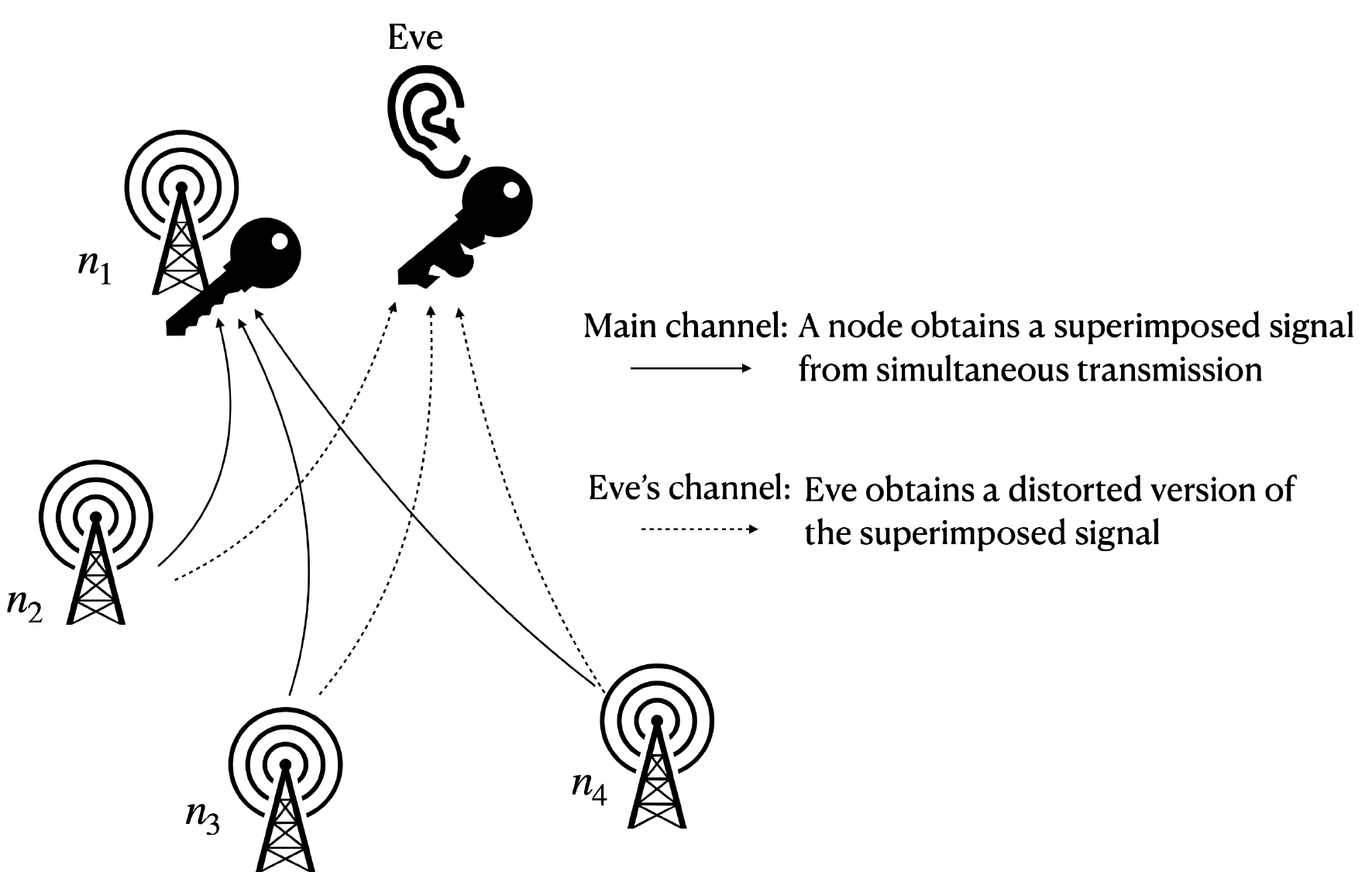}
		\caption{Illustration of the network model with nodes, $n_1, ..., n_4$ and Eve. The proposed key generation approach is based on simultaneous transmission of all nodes. A node receives a superimposed signal which contains information from the remaining nodes and can use this information to generate secret key. On the other hand, Eve obtains a highly distorted version of the superimposed signal since Eve's channel is different (ideally independent) than the main channels.}
		\label{basemod}
	\end{figure}

Consider a wireless sensor network that consists of $N$ nodes as depicted in Fig.~\ref{basemod}. We assume that an eavesdropper (Eve) is also present and monitors the transmitted signals. We denote the set of all nodes by $\mathcal{N} = \{n_1,\ldots,n_N\}$, where nodes are in an arbitrary order. We specify any transmitting node as $n_i \in \mathcal{N}$ and any receiving node as  $n_j \in \mathcal{N}$, where $i,j \in \{1,\ldots,N\}$. An overview of used notations in this paper is presented in Table
~\ref{tab:notations}. Also, we would like to address some important notions in detail as follows.
\begin{itemize}
    \item \textit{Simultaneous transmission:} Nodes in the network transmit their signals simultaneously during the same time and frequency block.
    \item \textit{Half-duplex communication}: Nodes in the network can only transmit or receive at a time block.
    \item \textit{The set of key components ($\mathcal{P}$):} A publicly known set that consists of Gaussian prime integers, $\mathcal{P}\subseteq \mathbb{N}$. 
    \item \textit{Key component/prime component ($p_i$):} Each node, $n_i$, randomly selects an element from the set of key components as $p_i$ in order to use in the proposed algorithms. $p_i$ is assumed to be only known by $n_i$.
    \item \textit{Common information ($S$):} An information that is shared by all legitimate nodes as the output of the proposed algorithms. $S$ is a function of the key components. Here, $S$ can be considered as a secret key as well as it can be later processed to generate new secret keys.
\end{itemize}

\begin{table}[t]
    \centering
    \caption{A list of notations.}
    \begin{tabular}{|p{.8cm}|p{6.5cm}|} \hline Notation & Definition \\ \hline
    
 $\mathcal{N}$ & The set of all nodes.  \\ \hline

 $\mathcal{P}$ & The set of key components.  \\ \hline

 $\mathcal{S}$ & The set of nodes in a subnetwork.  \\ \hline

 $N$ & Number of nodes, i.e., network size.  \\ \hline

 $K$ & Number of prime factors of $N$, i.e., number of dimensions after decomposition.  \\ \hline

 $n$ & A node.  \\ \hline

 $i$ & Index of a transmitting node.  \\ \hline

 $j$ & Index of a receiving node.  \\ \hline

 $E$ & Index of eavesdropper, i.e., Eve.  \\ \hline

 $y$ & The received signal.  \\ \hline

 $h$ & The channel fading coefficient.  \\ \hline

 $\omega$ & The additive white Gaussian noise.  \\ \hline

 $x$ & The transmitted signal.  \\ \hline

 $\varphi$ & The \textit{pre}-processing function.   \\ \hline

 $\psi$ & The \textit{post}-processing function.  \\ \hline

 $C$ & The prime factors of a decomposed network.  \\ \hline

 $TPC$ & Total power consumption. \\ \hline

 $TCT$ & Total communication time.  \\ \hline

 $TOB$ & Total occupied bandwidth.  \\ \hline

    \end{tabular}
    \label{tab:notations}
\end{table}

Our purpose is to provide a common information $S$ to all legitimate nodes using in-network communications without leaking $S$ to Eve. 
In this system we consider the following \textit{attacker model}. An eavesdropper can gain access to capture the ongoing transmissions. Considering a highly capable eavesdropper, we assume that she can estimate the channel between the transmitter and herself ideally. As an improvement to attacker models in the literature, we consider following two different attacker models to test our method against.
  \begin{enumerate}[label=Attacker-\arabic*),leftmargin=*]
   \item Eve's spatial location is highly correlated with the legitimate receiver, i.e., Eve's (perfectly estimated) channel coefficients are distorted versions of legitimate receivers. 
    \item Eve's spatial location is identical to the legitimate receiver, i.e., Eve's (perfectly estimated) channel coefficients are identical to the legitimate receiver.
\end{enumerate}

Note that as ideal channel estimates can not be obtained in actual communication systems, Eve has been equipped with powerful features that are not necessarily realistic. The communications that generate a common information takes place over the wireless multiple access channel which is defined as follows.
    \begin{mydef}[W-MAC~\cite{Goldenbaum2013}]\label{W-MAC}
	Let $x_i$ be the signal transmitted by the user $n_i$ and let $y_j$ be the observation of W-MAC at the receiver end of the $n_j$. The channel fading coefficient and the additive white  Gaussian noise (AWGN) between the $n_i$ and $n_j$ are denoted by $h_{ij}$ and $\omega_j$, respectively. Then, we define the signal received by $n_j$ over the W-MAC as
		\begin{equation}\label{MAC_def}
			y_j=\sum_{{
   			i=1
  			}} ^{N} h_{ij}x_i + \omega_j.
		\end{equation}
	\end{mydef}
	
The authors in~\cite{Goldenbaum2013b} prove that with proper adjustments to signals at transmitter and receiver ends, any desired function is computable over the W-MAC. The adjustments named as \textit{pre}-processing and \textit{post}-processing functions of transmitter and receiver ends, respectively are defined as follows.
	\begin{mydef}[\textit{pre}-processing function~\cite{Goldenbaum2013b}]\label{pre_def}
A process $\varphi_i$ defined as $x \in \mathbb{R},  \varphi_i(x_i)=(\varphi\circ x_i)$ is the \textit{pre}-processing function of the transmitter node $n_i$.
	\end{mydef}
	\begin{mydef}[\textit{post}-processing function~\cite{Goldenbaum2013b}]\label{post_def}
A process $\psi_j$ defined as $x \in \mathbb{R}, \psi_j(y_j)=(\psi_j\circ y_j)$ is the \textit{post}-processing function of the receiver node $n_j$.

\end{mydef}
Basically, \textit{pre}-processing and \textit{post}-processing functions provide the means to adjust the channel model to represent a desired function. Multiplication of the signals over the W-MAC is the main concept that our algorithms are built on. The \textit{pre}-processing function of the multiplication operation can be given as
\begin{equation}\label{pre_mult}
	{\varphi}_i(x_i) = \dfrac{1}{h_{ij}}\ln(x_i).
\end{equation}\par
    
\begin{remark}
In AFC applications, channel is assumed to be noiseless ($\omega$=0) and inverted beforehand with the channel state information (CSI) ($h$=1). As a result, the channel fading coefficient $h_{ij}$ is not given in the \textit{pre}-processing functions. 
Also, we consider a more realistic scenario, where AWGN is added to the received signals.   
\end{remark}
	
Natural property of logarithm enables us to obtain the logarithm based product of signals. Accordingly, conversion to polynomial base is required at the receiver end. The \textit{post}-processing function, 
	\begin{equation}\label{post_mult}
		{\psi}_j(y_j) = \textrm{exp}[y_j],
	\end{equation}
completes the reconstruction of transmitted components at the the receiver node. With the \textit{pre}-processing function in~\eqref{pre_mult} and the \textit{post}-processing function in~\eqref{post_mult}, product of transmitted components can be obtained at the receiver node as follows
	\begin{equation} \label{explain}
    \begin{split}
    	\psi & \left(\sum_{i=1} ^{N} \varphi_i(x_i) \right)
  		=\psi\left(\ln\left(\prod_{{i=1}} ^{N} x_i\right) \right)
  		=\prod_{{i=1}} ^{N} x_i. 
  	\end{split}
    \end{equation}
In the following section, based on our observations from AFC, especially the multiplication operation over the wireless channel, we generate a secret key that is shared by all WSN nodes.


\section{ Secret Key Generation}
	
The key generation method is based on the idea that a single node can obtain a function of information from other nodes in one simultaneous communication. Repeating the communication for $N$ nodes results in a network where all nodes possess the same information. If the common information is securely generated by all nodes, it can be used as a secret key in any encryption-based communication.

As in all AFC applications, receiving a function of information from the W-MAC requires signal processing, which is given in Def.~\ref{pre_def} and~\ref{post_def}. Application of the signal processing eventually matches the W-MAC with the desired function.  We firstly define the matched channel model for half-duplex communication (W-HMAC). Then, the \textit{pre} and \textit{post}-processing functions that are essential for the key generation are proposed. 

	\begin{mydef}[W-HMAC]\label{W-HMAC_def}
Let $\varphi_i(x_i)$ be the \textit{pre}-processing function applied by the transmitting node $n_i$ and $\psi_j(y_j)$ be the \textit{post}-processing function applied by the receiving node $n_j$ where $i,j \leq N, i \neq j$. Then, we define 
		\begin{equation}
			\psi_j(y_j)=\psi_j\left(\sum_{\substack{i=1,i\neq j}} ^{N}	h_{ij}\varphi_i(x_i)\right),
		\end{equation}
the \textit{post}-processed output of W-MAC as W-HMAC. 
	\end{mydef}

An illustration of the W-HMAC is given in Fig.~\ref{whmac}. In this model, a receiver node ($n_j$) obtains a function of the components that are transmitted from the remaining nodes ($n_i\in\mathcal{N}\setminus\{n_j\}$). The same common information can be delivered to all nodes by repeating the communication over W-HMAC and iterating the receiver node. However, providing the same common information to all nodes requires a \textit{pre}-processing function that inverts the channel's fading coefficients as,
 \begin{equation} \label{pre}
     \varphi_i(p_i) = \dfrac{\ln{p_i}}{h_{ij}}.
 \end{equation} 
Recall that \eqref{pre} contains channel coefficients, which can be obtained with a channel estimation process. Since transmitting nodes ($n_i$) require their channel estimates only towards $n_j$, broadcasting a pilot signal from $n_j$ is sufficient to provide all other nodes with $h_{ij}$. 

\begin{remark}
The channel estimation process can be vulnerable to active attacks (jamming or spoofing) since a malicious node can interfere with the pilot signal. Without a countermeasure, this scenario would lead to high discrepancies in the generated secret keys, and practically block the key generation process. Note that an active attack on pilot signals can be modeled as channel estimation error at the receiver. A numerical investigation on channel estimation error can be found in Section \ref{sec:numerical}. Specifically, Fig.~\ref{errprob3} depicts the impact of channel estimation error on the error probability of the proposed method. The simulation scenario considers addition of unwanted noise to the received signal to imitate the imperfections of the real world. As the noise power increases (transmit power of the attacker increases), the error in the generated keys increases.
\end{remark}

Any communication over the W-HMAC always outputs the same function. However, the inputs of W-HMAC are different at each iteration, since the receiver node index ($j$) changes. By applying the \textit{post}-processing function
\begin{equation} \label{post}
	{\psi}_j(x_j,y_j) = x_j\textrm{exp}[y_j],
\end{equation}
the receiver node includes its own key component to the received signal and obtains the common information.  
It should be noted that the \textit{post}-processing function requires the prime component of the receiver node which is only available to the legitimate receiver. Consequently, eavesdropping $y_j$ from W-HMAC is not enough to obtain ${\psi}_j(x_j,y_j)$. The proposed half-duplex communication approach is summarized in Algorithm~\ref{alg1}. 

	\begin{figure}[t]
		\includegraphics[width=\linewidth]{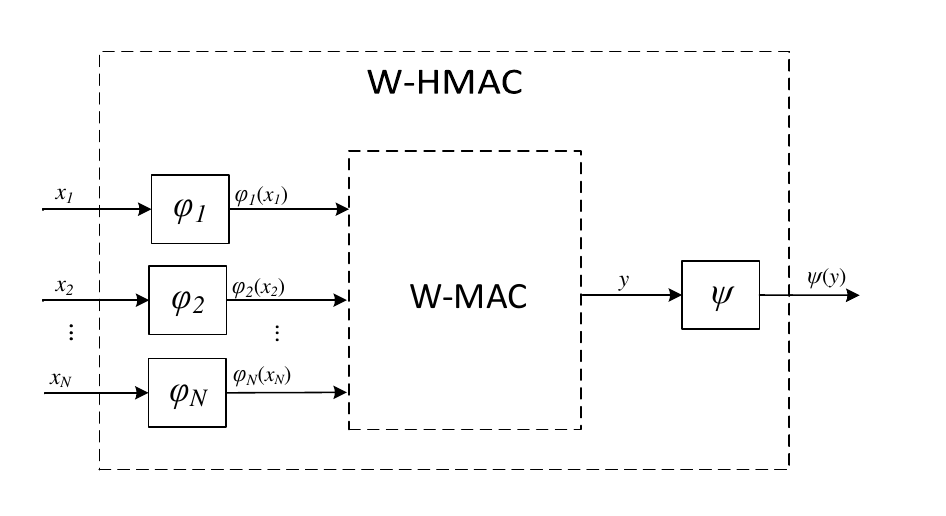}
		\caption{Illustration of W-HMAC where $x_1,...,x_n$ are inputs and $\psi(y)$ is output. W-HMAC is half duplex adaptation of W-MAC with proper \textit{pre}-processing and \textit{post}-processing functions to provide a common information to one node.}
		\label{whmac}
	\end{figure}

\begin{algorithm}[t!] \caption{Common information generation over W-HMAC.} \label{alg1}
\SetAlgoLined
\KwResult{All nodes in the WSN obtain the common information as the product of the prime components 
    $ \prod_{
   		i=1
  		} ^{N} p_i.$}
 \textbf{Initialization:} All nodes, $\mathcal{N}$, in the network choose a Gaussian prime integer $p_i\in \mathcal{P}$ as a key component. Set the receiver node index $j=1$.\\
 \Repeat -
 1) The nodes $n_i\in\mathcal{N}\setminus \{n_j\}$ transmit the output of their \textit{pre}-processing function given in~\eqref{pre}.\\
 2) $n_j$ calculates the common information by using~\eqref{post}.\\
 3) Update $j=j+1$.
 \end{algorithm} 
 
 Nodes of the WSN communicate over the W-HMAC where an arbitrarily chosen $n_j\in\mathcal{N}$ obtains the product of the prime components given as 
    \begin{equation}\label{half_output}
		\psi(p_j,y_j)= \prod_{
   		i=1
  		} ^{N} p_i.
	\end{equation}
After the \textit{post}-processing function, $y_j$ obtains~\eqref{half_output} which contains the prime components of all nodes. This common information can be used as a secret key in any encryption process or can be used to generate complex keys with a pre-determined key generation functions, i.e., a function can transform the product of the prime components into a secret key with desired bit length.
 
\begin{remark}
Providing common information to $N$ number of nodes requires $N$ sequential communications over the W-HMAC where each node becomes the receiver node once. Without any assumptions, iterating the receiver node index ($j$) for all WSN requires all nodes to possess an index. It should be noted indexing all nodes is an inexpensive process which can be completed by a single broadcasting before initiating Algorithm \ref{alg1}. As a result, the relationship between the time delay and the number of nodes is in linear scale as opposed to traditional pairwise secret generation methods. Also, the proposed algorithm is highly efficient in terms of bandwidth and energy since the nodes occupy the same frequency and time slot by simultaneously transmitting.
\end{remark}


\subsection{Security of the Secret Key Generation over W-HMAC}

The security of the key generation method relies on the size of the WSN and the fading of the W-MAC that is unique to the path between any two points. As given in the \textit{pre} and \textit{post}-processing functions, the components of the shared key are carried at the amplitudes of the transmitted signals. As a result of this analogous approach, we argue that the proposed method is also robust against eavesdropper attacks that aim to obtain the shared key. In this subsection, we investigate the feasibility of the proposed approach against active and passive attack models.

\subsubsection{Against Passive Attacks}

Multiple access nature of the wireless channel makes passive attacks inevitable. In a passive attack scenario, the attacker illegally eavesdrops the channel and obtains legitimate information. For a key generation scheme, passive attacks present a critical vulnerability since they may lead to the leakage of the secret key. The danger of a passive attack is highly related to the attacker's channel capacity. Previously, we have proposed Attacker-1 and Attacker-2 models to investigate various scenarios. The Attacker-1 model considers a realistic scenario where Eve's location is close to the legitimate receiver and Eve can estimate the channel coefficients of the legitimate path with a small discrepancy. The Attacker-2 model presents a more challenging scenario where Eve can perfectly estimate the channel coefficients of the legitimate path without any error. The resistance of our model against these two models will be examined in the following discussion. 

In the proposed model, communication over W-HMAC provides $n_j$ with the prime components of other nodes. After applying the \textit{post}-processing function, $n_j$ obtains these components from the received signal and multiplies it with its own prime component. Then, $n_j$ can use the obtained common information to generate secret key ($S$) and use it for any encryption and decryption processes. 
The only way for Eve to obtain the secret key is to obtain the prime components at the communication step by sniffing signals correctly towards $n_j$. However, for both attack scenarios, obtaining the output of W-HMAC is not simply sufficient to construct $S$, since Eve still needs the prime component of $n_j$. The only solution for Eve to obtain all the information (prime components) to construct $S$ is to eavesdrop at least 2 rounds of communications where minimum 2 nodes in $\mathcal{N}$ receive output sequentially from W-HMAC. 

In this case, Attacker-1 receives the following signal from one sniffing,
    \begin{equation*}
    \begin{split}
		y=\sum_{\substack{i=1,\;i\neq j}}^N r_{ij}\ln (p_i)
	\end{split}
    \end{equation*}
where $r_{ij}=\sfrac{h_{iE}}{h_{ij}}$. After the \textit{post}-process, Eve obtains $$\psi_E\left(\sum_{\substack{i=1,\;i\neq j}}^N r_{ij}\ln (p_i)\right) =\prod_{\substack{i=1,\;i\neq j}}^N p_i^{r_{ij}}$$
as the output of W-HMAC. Note that the quantity Attacker-1 obtained contains exponentially distorted versions of the prime components which brings us to the following theorem. 
	\begin{theorem} \label{theorem3}
Let $\psi_j(y_j)$ be the output $n_j$ receives from W-HMAC and $r_{ij}=h_{ij}/h_{iE}$. As long as $|1-r_{ij}|\neq 0$, the output Eve obtains from the W-HMAC, $\psi_E(y_E)$, is different from the output $n_j$ obtains. 
	\end{theorem}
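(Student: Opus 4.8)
The plan is to compare the two outputs explicitly and reduce the claim to a statement about prime factorizations. From \eqref{half_output} the legitimate user obtains $\psi_j(y_j) = \prod_{i \neq j} p_i$, while the eavesdropper's expression derived just above the theorem is $\psi_E(y_E) = \prod_{i \neq j} p_i^{r_{ij}}$. The whole theorem therefore amounts to showing that these two products cannot coincide once some exponent $r_{ij}$ departs from unity.

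First I would form the ratio $\psi_E(y_E)/\psi_j(y_j) = \prod_{i \neq j} p_i^{r_{ij}-1}$ and note that the two outputs agree exactly when this ratio equals $1$. Taking the natural logarithm converts the multiplicative condition into the additive one $\sum_{i \neq j} (r_{ij}-1)\ln p_i = 0$. The heart of the argument is then to show that this linear combination of logarithms of \emph{distinct} primes vanishes only in the trivial case. For that I would invoke the unique prime factorization theorem, on which the paper already rests: the logarithms $\{\ln p_i\}$ of distinct primes are linearly independent over the rationals, so no nonzero rational combination can vanish. Consequently, whenever at least one coefficient $r_{ij}-1$ is nonzero, i.e.\ $|1-r_{ij}|\neq 0$, the sum is strictly nonzero and $\psi_E(y_E) \neq \psi_j(y_j)$.

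I would then tie the hypothesis back to the physical model: the condition $r_{ij}=1$ would force $h_{ij}=h_{iE}$, meaning Eve shares $U_j$'s channel, which the spatial-separation assumption of Section~II (links at least $\lambda/2$ apart are uncorrelated) precludes. Hence $r_{ij}\neq 1$ is the generic situation and the secrecy conclusion holds.

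The main obstacle I anticipate is that the channel ratios $r_{ij}$ are real-valued rather than rational, so linear independence of $\{\ln p_i\}$ over $\mathbb{Q}$ does not by itself rule out an accidental real dependence $\sum_{i\neq j}(r_{ij}-1)\ln p_i = 0$ with coefficients not all zero. To close this gap I would either restrict to the generic case, observing that the solutions of that single linear equation form a measure-zero hyperplane in the space of fading realizations, so the exceptional configurations occur with probability zero under continuous fading, or exploit the quantized/estimated nature of the coefficients to reinstate the clean rational-independence conclusion. This genericity step is where the argument needs the most care, since the bare statement $|1-r_{ij}|\neq 0$ only guarantees a nonzero coefficient vector, not the absence of a real linear relation among the $\ln p_i$.
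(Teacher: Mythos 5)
Your reduction is the same as the paper's up to cosmetics: the paper compares the difference $|\psi_j(y_j)-\psi_E(y_E)|$ and factors it as $\prod_{i\neq j} p_i\bigl(1-\prod_{i\neq j} p_i^{r_{ij}-1}\bigr)$, while you compare the ratio; either way the theorem becomes the claim that $\prod_{i\neq j} p_i^{r_{ij}-1}\neq 1$ whenever every $r_{ij}\neq 1$. The genuine divergence is in how that claim is handled. The paper handles it by assertion: its proof ends by stating that the error term $E_r$ is nonzero ``once $|1-r_{ij}|$ is not equal to $0$,'' with no further justification. You, by contrast, identify exactly why this needs an argument --- the relation $\sum_{i\neq j}(r_{ij}-1)\ln p_i=0$ could conceivably hold with real, nonzero coefficients even though rational independence of $\{\ln p_i\}$ (a clean consequence of unique factorization, as you say) forbids it for rational ones --- and you supply a measure-zero/genericity argument to rule it out almost surely under continuous fading. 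Your concern is not hypothetical: with two transmitting users, setting $r_{1j}=1+\ln p_2$ and $r_{2j}=1-\ln p_1$ gives
\begin{equation*}
p_1^{r_{1j}-1}\,p_2^{r_{2j}-1}=e^{\ln p_1 \ln p_2-\ln p_2 \ln p_1}=1,
\end{equation*}
so both hypotheses $|1-r_{ij}|\neq 0$ hold yet $\psi_E(y_E)=\psi_j(y_j)$; the theorem as literally stated is therefore false for adversarially chosen channel ratios, and only its almost-sure version is true. In short, your proposal is sound where soundness is attainable, and the ``gap'' you flagged in your own write-up is in fact a gap in the paper's proof rather than in yours: the paper buys brevity by glossing over precisely the step you isolate, whereas your route buys a correct (probability-one) statement at the cost of weakening the deterministic claim.
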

	\begin{proof}
Difference between the W-HMAC outputs of $n_j$ and Eve is the following expression,
\begin{equation}
\begin{split}
|\psi_j(y_j)- \psi_E(y_E)|&= |x_j\exp [y_j]- \exp[y_E]|\\
&=\prod_{\substack{i=1}}^N p_i-\prod_{\substack{i=1,\;i\neq j}}^N p^{r_{ij}} \\
\end{split}
\end{equation}
which can be organized as
$$\prod_{\substack{i=1}}^N p_i \underbrace{\left( 1- \prod_{\substack{i=1,\;i\neq j}}^N p_i^{r_{ij}-1}\right)}_{E_r}.$$
	Note that Eve will have a discrepancy once the error, $E_r$, is not 0 and it occurs when $|1-r_{ij}|$ is not equal to $0$.
	\end{proof}
Depending on the channel fading coefficient, choosing high digit prime integers makes decipher of $S$ infeasible as a high discrepancy occurs between the W-HMAC output of Eve and $n_j$. The following lemma and example given below illustrate the destructive effect of the fading and simultaneous transmission on Eve's computation. 
	\begin{lemma}\label{lemma1}
	Let $s=1.a_1a_2a_3\dots a_m$ be a real number where $m$ represents the length of the decimal part such that the first non-zero term in the decimal part is $a_r$ for some positive integer $r$. Then $n\cdot s$ has the same first $r$ digits with $n$ and the remainders will be different.
	\end{lemma}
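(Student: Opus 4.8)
The plan is to separate the effect of $s$ into a pure perturbation and then to control the size of that perturbation against the place value of the leading digits of $n$. First I would write $s = 1 + \delta$ with $\delta = \sum_{k=r}^{m} a_k 10^{-k}$. The hypothesis that $a_1 = \cdots = a_{r-1} = 0$ while $a_r \ge 1$ is exactly what pins $\delta$ between consecutive powers of ten,
\[
10^{-r} \le \delta < 10^{-(r-1)},
\]
and this is the only quantitative fact about $s$ that I will need.

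Next I would rewrite $n \cdot s = n + n\delta$, so that multiplication by $s$ is recast as an additive increment $n\delta$. The second assertion is then immediate: since $\delta > 0$ we have $n\delta > 0$, hence $ns > n$ and the two expansions cannot coincide beyond the protected leading block, which is what it means for ``the remainders to be different.'' For the first assertion I would estimate the size of the increment. Writing $n$ with $d$ digits, so that $10^{d-1} \le n < 10^d$, the bounds on $\delta$ give $10^{\,d-1-r} \le n\delta < 10^{\,d-r+1}$; that is, the increment sits roughly $r$ decimal orders below the magnitude of $n$, so it disturbs only the low-order tail and leaves the leading digits of $n \cdot s$ agreeing with those of $n$.

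The step I expect to be the main obstacle is the passage from this magnitude estimate to the exact digit count. A size bound alone does not forbid a chain of carries from climbing out of the low-order block into the leading digits: the extreme case $n = 99\cdots 9$ shows that an arbitrarily small positive increment can flip every digit and even create a new leading digit, and when $\delta$ sits near the top of its range the increment can reach the $r$-th significant position directly rather than only through a carry. To make the conclusion literally correct I would therefore need the increment $n\delta$ to be smaller than the gap from $n$ up to the next multiple of $10^{\,d-r}$, which amounts to keeping $n$ away from such a boundary and $\delta$ near $10^{-r}$. It is this carry bookkeeping, rather than the elementary estimates above, that demands the real care, and I would close the argument either by imposing that regularity on $n$ or by verifying the gap inequality directly in the range of interest.
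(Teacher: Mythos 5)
Your decomposition is essentially the same as the paper's: the paper also writes $n\cdot s$ as $n$ plus a sum of copies of $n$ shifted down by $r, r+1, \dots$ decimal places (it does this digit-by-digit rather than via the clean bound $10^{-r} \le \delta < 10^{-(r-1)}$, but the content is identical). Where you differ is in what you do with that decomposition: the paper multiplies out the digit expansions, collects terms such as $(n_{d-r-1}+n_1)\times 10^{d-r-1}$, and concludes directly that the first $r$ (``or $r-1$'') digits survive, whereas you stop and refuse to draw that conclusion without controlling carries.

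Your refusal is the correct call, and this is the important point: the carry obstacle you identify is not a defect of your write-up but a genuine gap in the lemma and in the paper's own proof. The paper's expansion silently assumes the digit sums it collects (e.g.\ $n_{d-r-1}+n_1$) stay below $10$; nothing enforces that, and a carry chain can climb through the entire leading block. Your all-nines example is a concrete counterexample to the statement as written: take $n = 999999$ and $s = 1.0001$ (so $r = 4$); then
\begin{equation*}
n\cdot s = 999999 + 99.9999 = 1000098.9999,
\end{equation*}
which agrees with $n$ in none of its leading digits, not even the first, so the hedge ``or $r-1$'' does not rescue the claim. The lemma is only true generically, i.e.\ under an additional hypothesis of exactly the kind you propose: either $n$ must be bounded away from the next multiple of $10^{d-r}$ (no long run of leading nines), or one must verify your gap inequality $n\delta < \lceil n/10^{d-r}\rceil 10^{d-r} - n$ directly. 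For the paper's intended use (primes chosen by users, distorted by a channel ratio $r_{ij}$) the generic statement is what matters, but as a mathematical lemma it needs the side condition, and your proposal is the more honest account of why.
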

	\begin{proof}
The corresponding proof is given in Appendix A.
	\end{proof}
	\begin{example}\label{example1}
Suppose each node has selected a prime integer with at least 6 digits and $E$ obtains $\psi_E(y_E)$. Let $|r_{ij}|>1.0001$ then from Lemma~\ref{lemma1}, last two digits of the prime numbers which was selected by $n_i$ will change and the multiplication $\psi_E(y_E)$ will be completely different (except maybe first two digits) than $\psi_j(y_j)$.
	\end{example}
An observation from Example~\ref{example1} is the impact of the number of nodes. Since each input of $n_i$ will be distorted by $|r_{ij}|$, a linear increment in the number of nodes will exponentially increase the discrepancy that Eve faces.   

We can observe that a small discrepancy in Eve's estimation of the legitimate channel can prevent Attacker-1 from ever reconstructing the secret key. For the Attacker-2 scenario, we assume that Eve can estimate the legitimate channel without any error. Although this assumption is not realistic, it also requires Eve to be spatially located in more than one location. Since a receiving node uses its own key component to construct the secret key, what Eve receives from the channel does not contain the prime component of the receiver. Hence, Eve has to listen to at least another communication to decipher the missing key component. Moreover, Eve has to be located at the exact location of the receiving node to perfectly estimate the channel coefficients. As a result, giving perfect estimation capabilities to Eve is not enough for it to decipher the secret key; it also requires the capability to eavesdrop multiple locations sequentially. Even in this scenario, Eve faces more errors than legitimate nodes. The reason is that a legitimate node constructs the key with a single signal reception which introduces a single noise component into the key. However, Eve constructs the key by sniffing multiple communications which encompass multiple noise components to its key.

\subsubsection{Against Active Attacks}
Active attacks are another vulnerability of the wireless channel. In active attack scenarios, the attacker acts invasive and emits its signals into the channel to disrupt legitimate communication. In our system model, active attacks can carry two purposes: spoofing the secret key by implementing a fake key component into the system or disrupting the key generation process by jamming the channel. The existing countermeasures against active attacks usually rely on detecting the attack (authenticating the legitimate nodes) or avoiding the attacker \cite{8792139}.

The general countermeasures for spoofing attacks are based on detecting the spoofer. In \cite{altun2020}, we have considered a simultaneous transmission scenario and investigated its feasibility against spoofing attacks. Our results showed that a spoofer has to perfectly estimate its channel and the legitimate channel coefficients to carry a successful spoofing attack. Otherwise, the receiving node is able to detect the error. Moreover, it is shown that the receiving node is able to classify the error as a natural noise or a spoofing attack with high probability. In our key generation model, the results of \cite{altun2020} are still viable with the addition that the spoofer also needs to repeat its attack for all nodes to insert a fake key into the system successfully. 

The spoofers can also target a single node to prevent it from obtaining the correct key. In this scenario, the spoofer does not need to repeat its attack for all nodes. Although the proposed method in \cite{altun2020} can detect spoofers with high probabilities, we  can also consider other countermeasures against spoofing attacks. First of all, a key reconciliation step can be added to verify that every node generates the same key. A reconciliation step aims to detect and correct the errors between the generated keys~\cite{HUTH201684}. For this purpose, nodes can discuss over a public channel (which may, unfortunately, leak partial information to eavesdroppers) and compare their keys. In case of an active attack, a reconciliation step can detect disagreements between the nodes. Hence, the nodes can repeat the proposed method on different frequencies until the attacked frequencies are avoided. However, adding a key reconciliation system without leaking the key to eavesdroppers is a resource inefficient task. In fact, the methods in literature~\cite{HUTH201684} significantly increase the communication overhead. The simultaneous transmission-based methods promise to reduce the communication burden in various areas \cite{altun2021magic}. It is a future direction to propose a lightweight key reconciliation process suitable for multi-node networks.

In addition to key reconciliation methods, other physical layer authentication systems can be implemented at both the pilot transmission and communication stages to detect active attacks. Physical layer authentication techniques fundamentally exploit channel-based features to authenticate untrusted nodes. Compared to traditional methods, physical layer-based methods can provide lightweight applications and relieve the burden of key sharing/management~\cite{9279294}. Moreover, recent advancements in this area made physical layer authentication a promising countermeasure against active attacks. For instance, deep learning-aided methods as in~\cite{Qiu2020_new, Liao2019a} can be adopted to the proposed approach to detect spoofed signals with high performance.

The jamming attacks pose another threat to the proposed key generation model. The general countermeasures against jamming attacks are based on frequency hopping and spread spectrum techniques. These methods require additional steps before communication, increasing complexity and reducing time efficiency. Contrary to key reconciliation processes, the communication overhead of anti-jamming techniques is independent of the number of nodes and applicable for large networks. Although we see no restraint in implementing existing anti-jamming models (as in \cite{7996660}) into the proposed model, it is beyond the scope of this paper to investigate the feasibility and performance of such methods. We consider these analyses as a major future direction.

\subsection{Key Generation over Multiple Layers} \label{skgmultilayer}

The proposed channel model, W-HMAC, can be considered as a function that outputs a common information at the receiver node. Since this information consists of the prime components of all nodes, any receiver obtains the same common information over W-HMAC. Note that a communication over W-HMAC produces $S$ in a single node. Intuitively, selecting each node as receiver node once (where the remaining nodes are transmitters) is a simple solution that produces $S$ to all nodes in the WSN. In addition to this concept, various W-HMAC topologies can be constructed by dividing WSN into multiple subnetworks (and dividing the key generation process into multiple layers). As we will discuss in this subsection, using W-HMAC at multiple layers can provide additional flexibility and improve efficiency.

Firstly, we define the following metrics in order to investigate the performance of multi-layer systems. 

\begin{enumerate}
    \item \textit{Total power consumption (TPC):} Power is an important and scarce resource in WSNs. To compare various network models, the total required transmission power is considered as a performance metric. For this purpose, any transmission is assumed to consume unit power. TPC is measured as the total number of transmissions until each node in the WSN produces $S$.
    \item \textit{Total communication time (TCT):} Time is another resource that defines the performance of a key generation method. Contrary to conventional pairwise methods, proposed approach enables simultaneous transmission in which multiple nodes can transmit their information at the same time. Here, we assume that a transmission requires unit time. TCT is defined as the total required unit time until all nodes in the WSN obtains $S$.
    \item \textit{Total occupied bandwidth (TOB):} Communication over W-HMAC requires a single frequency block since it enables simultaneous transmission. We assume that communication over W-HMAC requires only one frequency block. TOB is defined as the total occupied frequency blocks until all nodes in the WSN obtain $S$.
\end{enumerate}

Consider a single layer key generation system where all nodes communicate through a single W-HMAC as given in Algorithm \ref{alg1}. We consider this model as \textit{single layer W-HMAC} with size $N$. TPC of this scenario is $N(N-1)$ where $N-1$ nodes transmit for $N$ iterations. TCT is $N$ unit time since W-HMAC takes a single time slot and is repeated for $N$ iterations. Lastly, TOB is $1$ frequency block since W-HMAC enables simultaneous transmission. Our objective at this stage is to design multi-layer key generation configurations that can reduce TPC and TCT. 

\begin{algorithm}[t] \caption{Common information generation over W-HMAC with multiple layers.} \label{alg2}
\SetAlgoLined
\KwResult{All nodes in the WSN obtain the product of the prime components as 
    $ \prod_{
   		i=1
  		} ^{N} p_i.$}
 \textbf{Initialization:} Decompose $N$ into its prime factors as $N= C_1 C_2 \dots C_K$.\\
 Arbitrarily place all nodes, $\mathcal{N} = \{n_1, n_2, \dots, n_N \}$, into $K$ dimensional space such that a node $n_i$ can be represented as $n_{c_1,c_2,\cdots, c_K}$ where $c_k$ indicates its place in the $k^{th}$ dimension. \\
 Set iteration number $k=1$.\\
 \Repeat -
1) Divide the WSN into $N/C_k$ subnetworks such that, $$ \mathcal{N} = \bigcup_{i=1}^{N/C_k} \mathcal{S}_{i}^{k}, $$ where $\mathcal{S}^k_{i}= \{ n_{c_1,c_2, \cdots, c_K} \}, c_k = i, \forall c_1, c_2 \cdots, c_K $ is the set of nodes in the $i^{th}$ subnetwork.  \\
 2) Each subnetwork initiates Algorithm~\ref{alg1} in itself and obtains a common information. 
 
 3) Each node updates its prime component, $p_i$, as the common information obtained from Algorithm~\ref{alg1}.
 
 4) $k=k+1$.
 \end{algorithm}

For this purpose, we consider decomposing the network into multiple subnetworks and applying Algorithm~\ref{alg1} in multiple layers. The general framework for multi-layer configurations is presented in Algorithm~\ref{alg2}. The algorithm cardinally states the rules on how to design subnetworks (i.e., how to assign nodes into subnetworks) for multiple layers. In each layer, these subnetworks apply Algorithm~\ref{alg1} independently and simultaneously. At the end of a layer, all nodes update its prime component with the output of Algorithm~\ref{alg1}. The layer and subnetwork configuration of Algorithm~\ref{alg2} enables all nodes to obtain the same secret information at the end of the last layer. Details of Algorithm~\ref{alg2} can be visualized with an example given in Fig.~\ref{subnetworks}.


\begin{figure*}[t]
\centering
\includegraphics[width=1\linewidth]{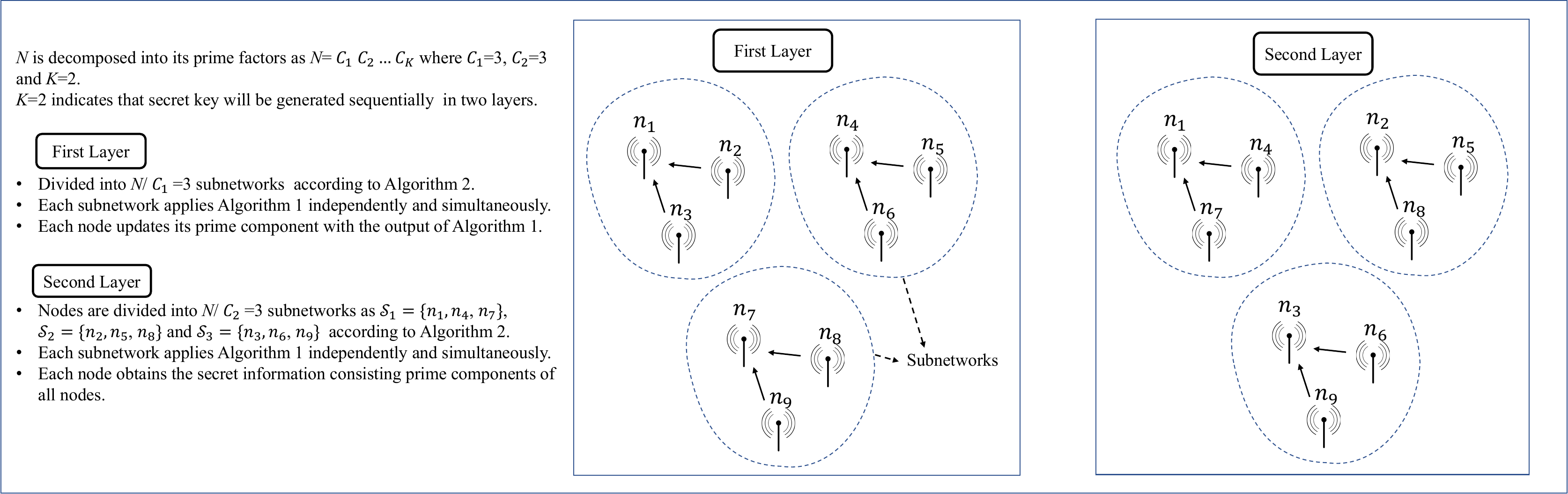}
\caption{An example of multi-layer secret generation model where $N=9$ nodes apply Algorithm~\ref{alg2} and generate a common information in two layers. In each layer, nodes are distributed to subnetworks. }
\label{subnetworks} 
\end{figure*}

A two-layer scenario where $9$ nodes aim to generate $S$ is illustrated in Fig.~\ref{subnetworks}. The key generation process is divided into two layers where WSN is divided into three subnetworks at each layer. In the first layer, each subnetwork uses W-HMAC at different frequencies. Each subnetwork uses Algorithm~\ref{alg1} to generate a common information which takes three time slots. At the end of the first layer, each subnetwork obtains a different common information, i.e. while $n_1$, $n_2$ and $n_3$ obtains the same common information, $n_4$, $n_5$ and $n_6$ obtains another. In the second layer, the subnetworks are rearranged as illustrated in Fig.~\ref{subnetworks}. Particularly, a single node from each previous subnetwork is selected into a new subnetwork. As an example, Fig.~\ref{subnetworks} illustrates a subnetwork with $n_4$, $n_4$ and $n_7$ which were in different subnetworks in the first layer. In the second layer, nodes use their common information from the first layer as their input to W-HMAC of the second layer. As a result, each subnetwork obtains the same common information at the end of the second layer, since the second layer W-HMAC inputs are identical for each subnetwork. TPC of the given example is equal to $36$ unit power while TCT is $6$ unit time. TOB of this case is $3$ frequency blocks.

For a $K$-Layer scenario, TPC can be generalized as, $$ N \sum_{k=1}^K (C_k-1),$$ where $C_k, k = 1, \cdots, K$ is the subnetwork size (i.e. number of nodes in a subnetwork) of the $k^{\text{th}}$ layer. TCT is equal to the sum of subnetwork sizes of each layer as $C_1 + C_2 + \dots + C_K$. TOB is the maximum number of subnetworks in a layer as $N/C_k$. It should be noted that reducing the size of the subnetworks to prime factors increases power efficiency while reducing bandwidth efficiency.


\section{Numerical Results} \label{sec:numerical}

The half-duplex key generation method is investigated with simulations in this section. Three realistic scenarios are considered for simulations. 
\begin{itemize}
\item The first scenario (Fig. 4 and Fig. 5): AWGN is added to the received signal of both Eve and the legitimate nodes. Eve is assumed to have the same channel coefficient with the receiving node (as given in Attacker-2 model).
\item The second scenario (Fig. 6): Discrepancies are added to Eve's channel coefficient (as given in Attacker-1 model).
\item The third scenario (Fig. 7): the channel estimation error is added to the transmitting nodes. 
\end{itemize}
The key components are selected from a Gaussian prime set $\mathcal{P} = \{a + bj\}$, $a \in (0,5)$, $b \in (0, 5)$ and $a,b\in \mathbb{N}$. As a benchmark model, a simple broadcasting-based key generation method is considered where each node sequentially broadcast its key components. After $N$ sequential transmission, each node computes the product of obtained key components to construct the common information. Simulations are conducted in MATLAB with 1000 iterations for each figure. Also, MATLAB's VPA (variable-precision floating-point arithmetic) function is used to increase the number of digits evaluated with each function in order to imitate the channel with high precision. The simulation parameters are presented in Table~\ref{tab:params}.

\begin{table}[t]
        \caption{Simulation parameters and definitions}
    \centering \footnotesize
    \begin{tabular}{p{5.5cm}|p{1.5cm}}
\textbf{Parameter} &  \textbf{Quantity} \\ \hline
Number of iterations  & $1000$  \\ \hline 
Message set size (Gaussian primes) &  $8$  \\ \hline
$E_b/N_0$ (dB)  & $20:4:50$   \\ \hline
Number of nodes (Fig. 5) & $3, 4, 5, 6$   \\ \hline
Channel coefficient discrepency ($\sigma$) (Fig. 6) & $0.1, 0.01$
\\ \hline 
Channel estimation error ($\sigma_{\text{estimation}}$) (Fig. 7) & $0.1, 0.01$ \\
    \end{tabular}
    \label{tab:params}
\end{table}

Fig.~\ref{MSE} and Fig.~\ref{errprob1} consider the first scenario where only Gaussian noise is added to the system. In other words, Eve's channel is identical to the legitimate channel (Attacker-2). In Fig.~\ref{MSE}, Mean Squared Errors (MSE) of the obtained keys are presented for $N=3, 5, 7$ networks. Eve is assumed to listen to all communications in the network to generate the secret key since Eve can not obtain the legitimate receiver's key component from a single sniffing. The results show that Eve's key shows more errors than the legitimate node's key. While increasing SNR reduces the MSE of the legitimate node's key, the MSE of Eve's key is not affected. Also, the MSE of both Eve and the legitimate node increase with larger $N$ values since the size of the secret key increases for larger networks.

\begin{figure}[t]
	\includegraphics[width=\linewidth]{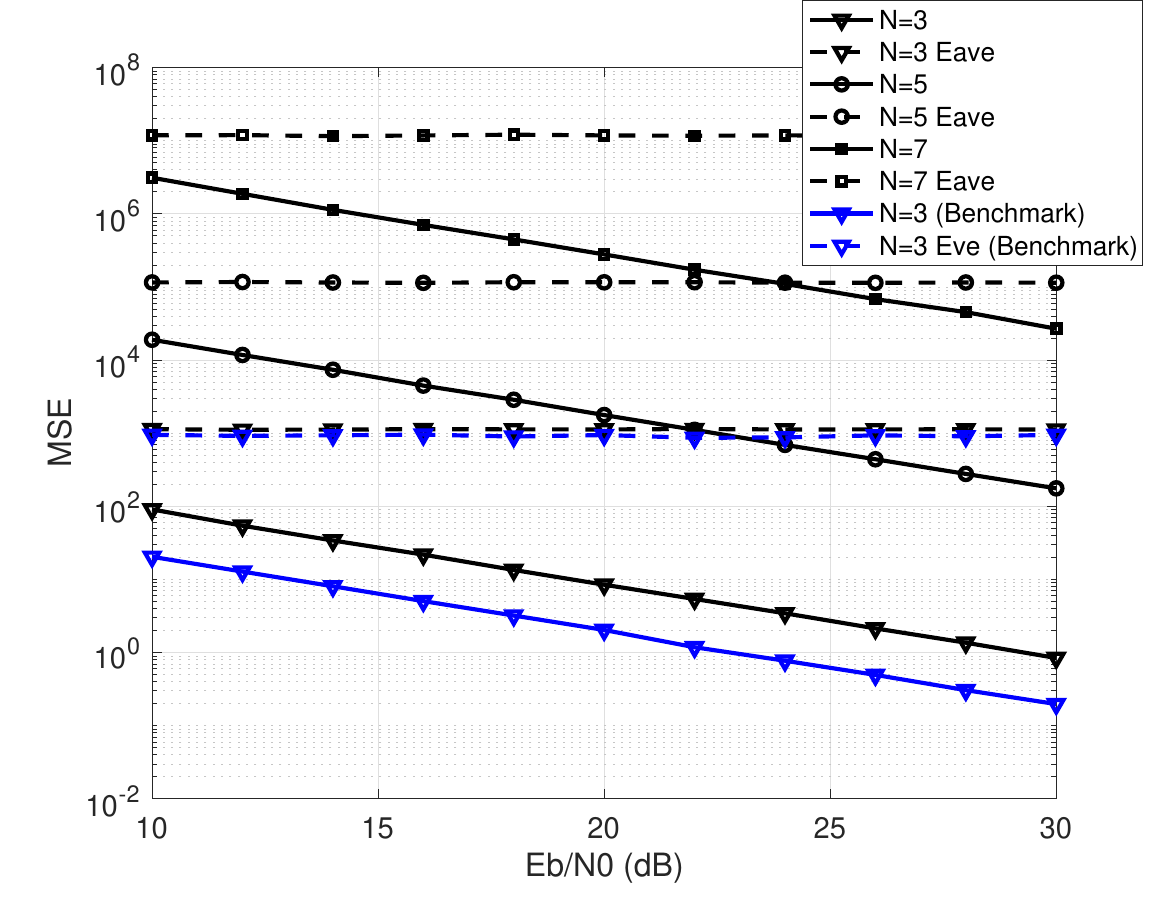}
	\caption{MSE of Eve's and the legitimate node's secret key (TPC$=6, 20, 42$ and TPC$=3, 5, 7$ for $N=3,5,7$, respectively; TOB$=1$ for all network sizes).}
	\label{MSE}
\end{figure}

In Fig.~\ref{errprob1}, the received signals are decoded to the closest possible secret key. The figure illustrates the error probability that the obtained key is different from the actual key. Eve is assumed to generate the secret key by sniffing all communications. As seen from the figure, Eve obtains mismatched (wrong) secret keys with more than $80 \% $ probability for $N=3$ and $95 \% $ probability for $N>3$. However, legitimate nodes can successfully create secret keys with more than $95 \% $ probability at high SNR regions. It should be noted that increasing the number of nodes increases the required SNR level for a feasible success rate. In fact, simultaneously transmitting with more than 5 nodes is not practical in average SNR regions. However, using multiple layers allows us to maintain low error rates (in addition to controlling time, power, and bandwidth consumption). For this reason, we limit our investigation to small network sizes and remind that large networks can be supported with multiple layers using Algorithm \ref{alg2}. In both Fig.~\ref{MSE} and Fig.~\ref{errprob1}, the broadcasting-based benchmark model shows similar results to the proposed model. In fact, it can be seen that the error probability of legitimate communication is smaller for the benchmark model. The reason comes from the fact that Eve's channel is identical to the main channel in Attacker-2 model. This unrealistic assumption is considered only to prove the feasibility of the proposed model under extreme attack conditions. However, this assumption also disables the main advantage of our model against the benchmark model.

\begin{figure}[t]
    \includegraphics[width=\linewidth]{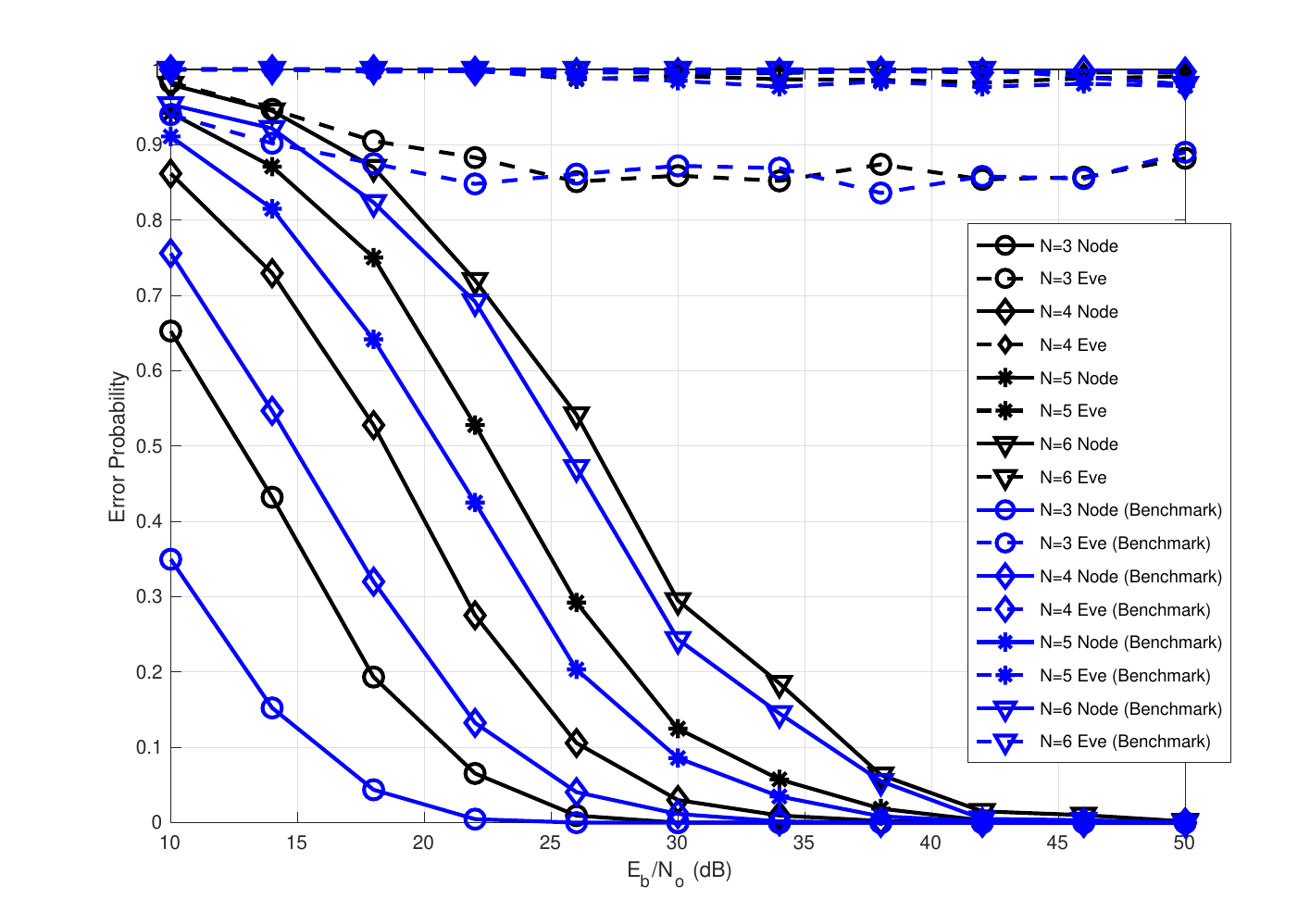}
    \caption{Error probabilities of Eve and the legitimate nodes on the secret key generation with AWGN (TPC$=6, 12, 20, 30$ and TCT$=3, 4, 5, 6$ for $N=3, 4, 5, 6$, respectively; TOB$=1$ for all network sizes).}
    \label{errprob1}
\end{figure}

Fig.~\ref{errprob2} considers the second scenario where discrepancies are added to Eve's channel coefficient (Attacker-1). In the figure, the error probability of Eve and the legitimate nodes is illustrated for $N=3$ networks. Eve is assumed to know the receiver's key component and it creates the secret key from a single sniffing since the scenario aims to investigate the effect of the channel coefficient. When a discrepancy with zero-mean $\sigma=0.1$ Gaussian distribution is added, Eve stays above of $50 \% $ error level. However, when the energy of the discrepancy is reduced to $\sigma=0.01$, the error floor of Eve reduces to $10 \% $. It should be noted that Eve can correctly obtain messages with a higher probability in the benchmark model. On the other hand, the proposed model is more robust against eavesdroppers.

\begin{figure} 
   \includegraphics[width=\linewidth]{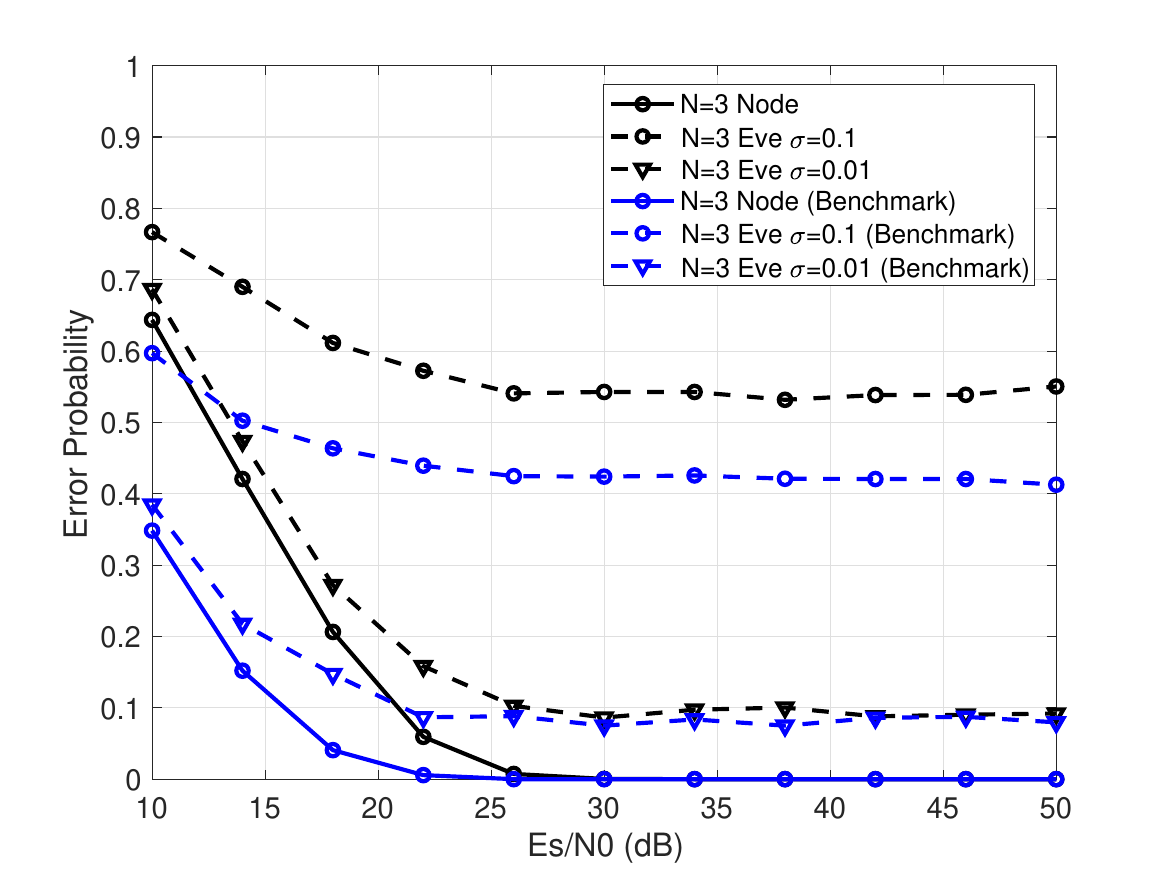}  \vspace{.5 cm}
    \caption{Error probabilities of Eve and the legitimate nodes on the secret key generation with channel coefficient discrepancy (TPC$=6$, TCT$=3$ and TOB$=1$ for all network sizes).}
    \label{errprob2}
\end{figure}

The channel estimation error is added to the system in the third scenario. Fig.~\ref{errprob3} shows the error probability of Eve and the legitimate receiver for $N=3$. The legitimate transmitters are assumed to make a channel estimation error that is modeled with zero-mean $\sigma_{estimation}$ Gaussian distribution. AWGN noise is considered in to all receivers and Eve is assumed to have $\sigma=0.1$ discrepancy on its channel coefficient. Also, Eve is assumed to create the secret key from a single sniffing since the scenario aims to investigate the effect of the channel estimation error. When $\sigma_{estimation}=0.01$ is considered, error probability of the legitimate nodes reduces below $5 \%$ after $25$ dB SNR. However, the feasibility of the proposed model is badly affected when $\sigma_{estimation}=0.1$, since the error probability of legitimate nodes rises to $50 \%$ floor. 

\begin{figure} [t]
\includegraphics[width=\linewidth]{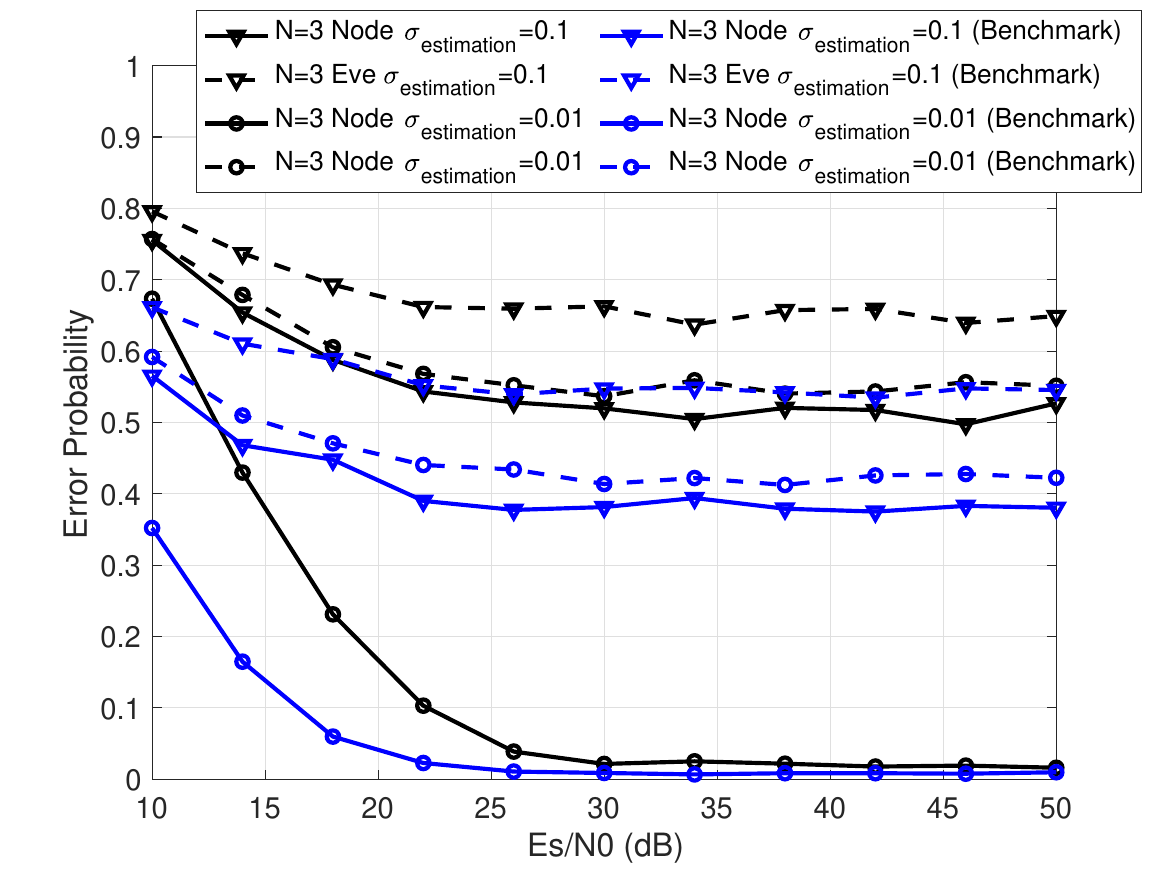} 
\caption{Error probabilities of Eve and the legitimate nodes on the secret key generation with channel estimation error (TPC$=6$, TCT$=3$ and TOB$=1$ for all network sizes).}
\label{errprob3}
\end{figure}

\begin{table*}[t]  \caption{TPC, TCT and TOB comparison of various state-of-the-art models and the proposed multi-layer approach.} 
	\centering 
	\begin{tabular}{ |c|c|c|c|c|c|c|c| }

		\hline
		\multicolumn{8}{|c|}{  \textbf{The State-of-the-Art Models} }  \\ \hline
		
	\multicolumn{4}{|c|}{  \textbf{The paper}} &  \textbf{WSN Size} & \textbf{TPC} & \textbf{TCT} & \textbf{TOB} \\ \hline
				
    \multicolumn{4}{|c|}{ Liu \textit{et al. \cite{Liu2014}  }} &  $N$  &  $N(N-1)$  &  $N$  &  $1$  \\ \hline
		
    \multicolumn{4}{|c|}{ Xiao \textit{et al.} \cite{Xiao2018}}  &  $N$  &  $N(N-1)$  &  $N$  &  $1$  \\ \hline
	
    \multicolumn{4}{|c|}{ Thai \textit{et al.}    \cite{Thai2019} }  &  $N$  &  $N(N-1)$  &  $2N$  &  $1$  \\ \hline

    \multicolumn{4}{|c|}{ Li \textit{et al.}  \cite{Guyue2019} } &  $N$  &  $N$  &  $N$  &  $1$  \\ \hline

    \multicolumn{4}{|c|}{ Zhang \textit{et al.} \cite{Zhang2019} } &  $N$  &  $N(N-1)$  &  $N$  &  $1$  \\

		\hline \hline
		\multicolumn{8}{|c|}{\textbf{The Proposed Model}}\\ \hline
		
		\textbf{WSN Size} & \textbf{Layer size} &  $\mathbf{1^{\textbf{st}}}$ \textbf{layer} & $\mathbf{2^{\textbf{nd}}}$ \textbf{layer}  & $\mathbf{3^{\textbf{rd}}}$ \textbf{layer} & \textbf{TPC} & \textbf{TCT} & \textbf{TOB} \\
		\hline
		
		$N$ & $1$-Layer  &  $N$ & & & $N (N-1)$ & $N$ & $1$ \\ \hline    
		$N=C_1 C_2 \dots C_K$ & $K$-Layer & $C_1$ & $C_2$ & $C_3$ & $N \sum_{k=1}^K (C_k-1)$ & $C_1 + C_2 + \dots + C_K$ & $\max[N/C_k]$ \\ \hline \hline
		$9$ & $1$-Layer &  $9$ & & & $72$ & $9$ & $1$ \\ \hline
		$9$ & $2$-Layer &  $3$ & $3$ & & $36$ & $6$ & $3$ \\ \hline
		$12$ & $1$-Layer & $12$ & & & $132$ & $12$ & $1$ \\ \hline
		$12$ & $2$-Layer & $3$ & $4$ & & $60$ & $7$ & $4$ \\ \hline
		$12$ & $3$-Layer & $2$ & $2$ & $3$ & $48$ & $7$ & $6$ \\ \hline
		$14$ & $2$-Layer & $2$ & $7$ & & $98$ & $9$ & $7$ \\ \hline
		$27$ & $3$-Layer & $3$ & $3$ & $3$ & $162$ & $9$ & $9$ \\ \hline
	\end{tabular}
	\label{tab:sizes}
\end{table*}

We have discussed error probability results of the proposed secret key generation scheme over W-HMAC by considering the given network in a single layer structure. In Section \ref{skgmultilayer}, we proposed TPC, TCT and TOB metrics to verify the performance of multiple layer structure presented in Algorithm~\ref{alg2}. Herein, we compare TPC, TCT and TOB performance of various state-of-the-art models and our approach. In Table~\ref{tab:sizes}, a number of possible layer structures and their time, energy and bandwidth costs are presented without including channel probing or estimation load. The proposed single-layer approach shows identical performance compared to most of the state-of-the-art models. However, multiple layer structure of the proposed model enables us to improve a performance metric by leveraging multiple-layers. For instance, Table~\ref{tab:sizes} presents the TPC, TCT and TOB of $N=9$ networks for $K=1,2,3$ layers.
Comparing 1-layer and 2-layer cases, TPC of the key generation process can be reduced to $36$ (from $72$) by using two layers. Also, TCT is reduced to $6$ (from $9$) in two layer case. However, multiple subnetworks require multiple frequency blocks to support unique W-HMAC configurations. In the given example, TOB is increased from $1$ to $3$ in two layer case. As seen from the example, using multiple layers can provide a trade-off between time, energy and bandwidth requirements of the proposed method.

\section{Conclusion}

We model the W-MAC with proper \textit{pre}-processing and \textit{post}-processing functions in order to match the W-MAC with a function that outputs a secret key which includes information from other users. 
The model is also extended to multiple layers to obtain flexibility between the time and frequency resources.
Lastly, we examine the performance of the proposed model for various  scenarios in a competitive manner with the benchmark system. It is shown that the proposed model provides security against passive eavesdroppers in certain scenarios. As a complementary future work, we consider to investigate and to combat the effect of noise in the proposed methods.

\appendices
\section{Proof of the Lemma~\ref{lemma1}}
    
Let $n=n_1n_2\dots n_d$ where $d$ is the number of digits of $n$. The decimal representation:
$$n=n_1\cdot 10^{d-1}+n_2\cdot 10^{d-2}+\dots+n_{d-1}\cdot 10 + n_d$$ and
$$s=1+a_1\cdot \dfrac{1}{10}+a_2\cdot \dfrac{1}{10^2}+\dots +a_r\dfrac{1}{10^r}+a_{r+1}\dfrac{1}{10^{r+1}}+\dots + a_m\dfrac{1}{10^{M}}$$ as $a_r$ is the first non-zero term in the decimal part $$a_1=a_2=\dots=a_{r-1}=0$$ and
$$s=1+a_r\dfrac{1}{10^r}+a_{r+1}\dfrac{1}{10^{r+1}}+\dots+a_m\dfrac{1}{10^{M}}$$

Then
$$\begin{array}{llll}
n\times s&=&\left( n_1 \times 10^{d-1}+n_2 \times 10^{d-2}+\dots + n_d\right)\cdot\\
&&\left (1+a_1 \dfrac{1}{10^r}+a_2 \dfrac{1}{10^{r+1}}+\dots a_m\dfrac{1}{10^{m}}\right)\\
&=&n_1 \times 10^{d-1}+n_2 \times 10^{d-2}+\dots+ n_d\\
&&+a_r \left(n_1 \times 10^{d-r-1}+\dots+n_d \times 10^{-r} \right)\\
&&+ a_{r+1} \left( n_1\times 10^{d-r-2} +\dots+n_d\times {10^{-(r+1)}}\right) \\
&&+\cdots + a_m \left( n_1\times 10^{d-m-1} +\dots+n_d\times {10^{-m}}\right) \\
&=&n_1\times 10^{d-1}+n_2\times 10^{d-2}+\dots+(n_{d-r-1}+n_1)\times \\
&&10^{d-r-1}+(n_{d-r-2}+n_2+n_1)\times 10^{d-r-2}.
\end{array}$$
Hence multiplication results in the number with the same first $r$ (or $r-1$) digits as the number $n$.

\section*{Acknowledgment}

The authors would like to thank Prof. Halim Yanikomeroglu for his invaluable comments.

\ifCLASSOPTIONcaptionsoff
  \newpage
\fi

\bibliographystyle{IEEEtran}
\bibliography{ref}





\end{document}